\newcommand{\bs}{}
\newcommand{\as}{}
\newcommand{\scode}[1]{\lstinline[basicstyle=\normalsize\sffamily]!#1!}
\newcommand{\code}[1]{\lstinline[basicstyle=\normalsize\sffamily]!#1!}
\newcommand{\mybox}[1]{{\begin{center}\fbox{\begin{minipage}{0.97\linewidth}#1\end{minipage}}\end{center}}}
\newcommand{\algname}[1]{\textsc{#1}}
\newtheorem{mydef}{Definition}
\newtheorem{myprop}{Proposition}
\title{DAReing to reduce the annotation overheads of verified programs%
\thanks{This work has been supported by EPSRC grants EP/M018407/1 and  EP/N014758/1.}}
\author{
Gudmund Grov \and Duncan Cameron and \and L\'eon McGregor\inst{1}
}
\institute{
Heriot-Watt University, UK \\ \email{\{G.Grov,dac31,lm356\}@hw.ac.uk}}
\authorrunning{Grov, Cameron and McGregor}
\titlerunning{DAReing to reduce the annotation overheads of verified programs}
\begin{document}

\maketitle



\begin{abstract}
Modern program verifiers use the same uniform program text to both specify and implement programs. The program text is also used to provide the necessary guidance to ensure that the program satisfies its specification. The amount of guidance required is often called \emph{the annotation overhead}. This can be high and is often seen as a hindrance for  wider use of program verifiers, as 
development time is increased and the guidance may obfuscate the program text. In this paper we introduce the DARe tool, which automatically removes as much unnecessary guidance as possible for the Dafny program verifier \cite{Leino10}. The tool is integrated with the Dafny IDE \cite{leino2014dafnyIDE}. 
To evaluate DARe, we apply it to $252$ programs from the Dafny library \cite{DafnyLib} and analyse the degree to which it is able to remove unnecessary guidance. Our results are very encouraging as a staggering $88\%$ of the guidance can be removed.
\end{abstract}

\bs{}
\section{Introduction}
\as{}



Users of modern \emph{auto-active} program verifiers, such as Spec\# \cite{Barnett04}, VCC \cite{Cohen09}, Verifast \cite{Jacobs10}, Dafny \cite{Leino10}, and  SPARK 2014 \cite{mccormick2015}, use the program text to encode: (i) the program specification, (ii) the program implementation, and (iii) the proof guidance to ensure that the program satisfies the specification. 

The advantage of these techniques, compared with interactive theorem provers such as Isabelle \cite{Paulson90} or Coq \cite{bertot04book}, is that both programs and proofs are developed in the same language, and using constructs more
familiar to software engineers. Thus, a developer does not need to learn to use an additional system.

The ratio of the number of lines of specification and annotations (LoA) to the number of lines of code (LoC) is called the \emph{annotation overhead}.
For non-trivial programs this ratio can be high, for example \cite{Penninckx12} reported on $4.8$ LoA per LoC. Such a high degree of annotations may obfuscate 
the program text and act as a hindrance for mainstream uptake of the technique beyond niche markets. There are at least two reasons why stated annotation overheads may be too high:
\begin{enumerate} 
\item Annotations are normally added incrementally to the program text until a proof is found. Previous increments may or may not have helped to progress the proof. Instead of attempting the (possibly time consuming and tedious) task of manually ``tidying up'' these annotations, a developer may be satisfied that the job is done. This scenario will be illustrated in \S \ref{sec:dafny}.
\item The underlying verifier is improved such that guidance that used to be required is no longer needed. Periodic ``tidying'' annotations of existing libraries may be too time consuming for a developer, in particular if this has to be done manually.
\end{enumerate}
In this paper we present the \emph{DARe} tool (\underline{D}ead \underline{A}nnotations \underline{Re}moval), which automatically removes annotations that are superfluous (or dead\footnote{Inspired by the term used for ``dead code'' we call them ``dead annotations''.}) for the Dafny program verifier. To investigate how useful this tool is, and thus the degree to which annotations are superfluous, we empirically evaluate the program using examples from the Dafny library and test suite \cite{DafnyLib}. For usability purposes, we investigate the impact DARe has on runtime, and 
embed it in the Dafny IDE for Visual Studio \cite{leino2014dafnyIDE}. This requires instant feedback and we therefore investigate and compare algorithms that are either simple, complete or fast.


\lstset{numbers=left,numbersep=5pt,stepnumber=1}

\bs{}
\section{Proofs \& programs in Dafny}\label{sec:dafny}
\as{}


Dafny is an imperative, object-oriented and functional programming language that has been designed for verification. Properties that have to be satisfied are commonly specified by \emph{contracts}: given a precondition that a method can assume, it must guarantee that a given postcondition holds. Special proof constructs, together with standard programming constructs, are used to provide guidance to ensure that the contract holds.  To verify a program, Dafny translates it into an \emph{intermediate verification language} (IVL)\footnote{An IVL can be seen as a layer to ease the process of generating new program verifiers.} called \emph{Boogie} \cite{Barnett06}. From Boogie a set of \emph{verification conditions} (VCs) is generated and applied to the \emph{Z3 SMT solver} \cite{Moura08}. If it fails, then the failure is translated back to the Dafny code, via Boogie.

\subsection{Assertions \& lemma calls} The simplest form of annotation is the use of \code{assert} statements in the code. An assertion must be verified, and will subsequently be used by the prover. Dafny also supports a \emph{ghost state}.
From a programming point of view, working in the ghost state is exactly the same the non-ghost state. The difference is that it is only used for verification and will not be compiled. To illustrate,
a ghost variable \scode{v} is declared as \scode{ghost var x}, and then used like a normally variable.

A \code{lemma} is a ghost method, i.e. a method that is in the ghost state (including all its local variables). It is therefore only used for verification purposes.
 Lemmas are used for more complex assertions that require guidance in order to be verified.  A lemma contract specifies the desired property, while the method body encodes the proof guidance. To illustrate, consider the following lemma and associated proof by contradiction \cite{Grov2016FM}:
\begin{lstlisting}
lemma set_inter_empty_contr(A:set<int>, B:set<int>, x:int)
 requires x in A && A * B == {}
 ensures !(x in B)
{
  if x in B {
    assert x in A * B;
    set_eq_simple(A*B,{},x);
    assert x in {};
    assert false; 
  }
}
\end{lstlisting} 
The lemma states that if a given variable \scode{x} is a member of the set \scode{A}, and the sets \scode{A} and \scode{B} are disjoint (their intersection \scode{A*B} is empty),
then \scode{x} cannot be in set \scode{B}. 

\lstset{numbers=none}

We will now illustrate how this proof was found, and use \scode{(err)...(enderr)} to highlight where Dafny reports that the error is. The proof in the lemma body started by setting up the contraction using an  \scode{if}-statement: 
\begin{lstlisting}
 if x in B { 
   (err) (enderr) 
 }
\end{lstlisting}
Dafny will translate the lemma with this body to Boogie, which will generate a set of VCs that are applied to the Z3 SMT solver. One of the VCs fails, and Boogie relates that to the 
post-condition not being satisfied for the case where \scode{x in B}. The error is then highlighted inside this case within the Dafny IDE, as illustrated above. As only this case is
highlighted we know that Dafny can prove the property when \scode{!(x in B)}, which is unsurpricing considering that this is the same as the postcondition.
For the problematic \scode{x in B} case, more guidance is required in the program text. 

We then continued developing the proof by asserting that \scode{x} is in the intersection of \scode{A} and \scode{B} (\scode{x in A * B}), which is trivial to prove as we know 
 \scode{x in A} fromt a precondition and that \scode{x in B} from the condition of the \scode{if} statement. We also add the contradiction we are deriving
(line 8 of \scode{set\_inter\_empty\_contr}). Dafny then highlights the following mistake:
 \begin{lstlisting}
 if x in B {
    assert x in A * B;
  (err) assert false; (enderr) 
 }
\end{lstlisting} 
Here, Dafny report that the last assertion (\scode{assert false}) cannot be proven. It does not complain about the first assertion, which means that it can be verified. 
Moreover, it no longer complain about the postcondition, meaning that it can be derived from the last assertion (which we still need a proof of).
We then add the assertion from line 5, from which we can derive \scode{false}:
 \begin{lstlisting}
 if x in B {
    assert x in A * B;
  (err) assert x in {}; (enderr) 
    assert false; 
 }
\end{lstlisting} 
We now see that Dafny can infer the contradiction, but needs help to show that \code{x in \{\}}. To bridge this line to the previous line,
we add a call to lemma \scode{set_eq_simple} (line 6). This lemma states that if \scode{x} is in \scode{A*B} then \scode{x} is in \scode{\{\}}:
 \begin{lstlisting}
lemma set_eq_simple(A : set<int>, B:set<int>, x :int)
  requires x in A && A == B
  ensures x in B
{ }
\end{lstlisting} 
This lemma is proven automatically (as the body is empty). The proof is now complete.

\lstset{numbers=left}

\subsection{Loop invariants \& variants} Loop invariants express properties that have to hold throughout the execution of a loop, and are normally required for any program that contains a loop.
 Dafny methods and loops must terminate (with one exception discussed in \S \ref{sec:dare}). For simple cases, Dafny can prove this automatically; for more complex cases, the developer must provide an expression, called a \emph{variant}\footnote{Other names for variants include rank functions or decrease clauses.}, which for each step has to
decrease towards a lower bound. 

To illustrate, consider the following implementation of binary search over an array, which returns the index where the value is, or returns \scode{-1} when the search fails to find the value \cite{DafnyLib}:
\begin{lstlisting}
method BinarySearch(a:array<int>,value:int) returns (index:int)
 requires a != null && 0 <= a.Length && sorted(a)
 ensures 0 <= index ==> index < a.Length && a[index] == value
 ensures index < 0 ==> forall k :: 0 <= k < a.Length ==> a[k] != value
{
  var low, high := 0, a.Length;
  while low < high
    decreases high - low
    invariant 0 <= low && low <= high && high <= a.Length
    invariant forall i :: 0 <= i < a.Length && !(low <= i < high) ==> a[i] != val
  { 
  	var mid := (low + high) / 2;
    if a[mid] < value { 
    	low := mid + 1; 
    } else if value < a[mid] { 
    	high := mid; 
    } else { 
    	return mid; 
    }
  }
  return -1; 
}
\end{lstlisting} 
The contract specifies the desired properties, and a \scode{while} loop is used to implement it. Lines 8 and 9 have a loop invariant that the developer has to provide to help the prover verify that the implementation indeed satisfies the contract. In addition, a variant is provided (line 7), which states that \scode{high - low}, meaning that the difference between \scode{high} and \scode{low} is reduced for each step of the loop.

\subsection{Calculations} The proof of \scode{set_inter_empty_contr} is a linear sequence of arguments: from \scode{x in A * B} it implies that \scode{x in \{\}}, which implies \scode{false}. 
The first implication is justified by a call to \scode{set_eq_simple(A*B,\{\},x)}. 

Dafny supports such \emph{calculational proofs} via the \scode{calc} statement \cite{Leino13a}, which is inspired by Dijkstra's way of writing calculuations \cite{Dijkstra90}. Using calculations, the body of \scode{set_inter_empty_contr} can instead be written as:
\begin{lstlisting}
 if x in B {
  calc {
    x in A * B;
  ==> { set_eq_simple(A*B,{},x);}
    x in {};  
  ==>
    false; 
  }
 }
\end{lstlisting} 
On line 1 a calculation is started, while line 2 gives the starting statement that \scode{x in A * B}. This implies (line 3) that \scode{x in \{\}} (line 4), justified by the lemma call \scode{set_eq_simple(A*B,\{\},x)} (line 3).
Then, \scode{x in \{\}} (line 4) implies (line 5) \scode{false} (line 6). This last step can be proven automatically, so Dafny does not require any further justification/hints.

Note that there are multiple ways of writing such calculations. For example, we can
state that by default, arguments are chained by an implication, and thus omit it for each step:
\begin{lstlisting}
 if x in B {
  calc ==>{
   x in A * B;
   { set_eq_simple(A*B,{},x);}
   x in {};  
   false; 
  }
 }
\end{lstlisting} 
We return to calculations in \S \ref{sec:dare}.

\lstset{numbers=none}

\bs{}
\section{The DARe approach to dead annotation removal}\label{sec:dare}

\subsection{A motivating example}

Let us return to the proof of the \scode{set_inter_empty_contr} lemma, and recall how we worked with the system by incrementally adding annotations until a proof was found.
At the end of the proof, we do not know which steps were required, with the exception of the last step that was added. One way to find the steps that were required is to remove the added annotations
one-by-one and see if the proof still holds. If we do this we will see that none of the annotations except the last lemma call were needed (i.e. lines 4,6 and 7). The body of the lemma can therefore be simplified to:
\begin{lstlisting}
 if x in B { 
   set_eq_simple(A*B,{},x); 
 }
\end{lstlisting} 
In the rest of the paper we describe and evaluate the DARe tool to automate this process. First we will describe what constitutes an annotation.

\subsection{Specification vs program vs proof guidance}

When removing annotations manually, as we did above, it is normally straightforward for the developer to separate
annotations that are only used to guide the prover (which can be removed) from specification elements (which should be kept). However, it is 
not that easy to formalise this relationship so that it can be understood by a computer. For example,
Dafny uses many of the same constructs for both proof and programming; e.g. the proof of  the \scode{set_inter_empty_contr} lemma
is really just a program. 
Moreover, annotations are sometimes used to help explain a correctness argument, which is desirable to keep.
In order to develop a tool that automatically removes dead annotations, we have to decide upon what constitutes an annotation that can be removed.  We detail our choices next. Some may disagree with them, and in the future we plan to empirically evaluate these choices (\S \ref{sec:concl}). Note that the DARe tool (\S \ref{sec:tool}) is modular and can easily be updated to incorporate such changes.

We consider all \emph{assertions}, \emph{lemma calls}, \emph{invariants} and \emph{variants} to be \emph{annotations} that can be removed. From our experience, this is normally the case. We keep the lemmas, even if all calls to them are removed.
We do not consider common programming constructs, such as variable declarations, assignments, loops and conditionals, 
 to be proof guidance, even if they are in the ghost state. These are therefore kept. This is something we will revisit in the future. We consider \emph{contracts} to be specification. This is not always the case: for example, one can implement iteration using recursive functions. Here, a ``loop invariant'' is encoded using preconditions and postconditions.
 In this case, contracts become proof guidance, but we cannot (syntactically) separate them from the actual contracts that specify the program.  In the future we may extend our work to weaken preconditions\footnote{Dual to weakening a precondition is to strengthen a postcondition. However, this does not fit our ethos of simplifying the program text by removing elements.}, which will require global analysis. Finally, calculations are considered to be annotations used for proof guidance.

We can now define what constitutes \emph{annotations} (for proof guidance) and \emph{dead annotations} in our context:
\begin{mydef}[Annotation] \label{def:annot}
An \emph{annotation} is a lemma call, a variant, an invariant, an assertion, a calculation, a calculation step or a calculation hint.
\end{mydef}
\begin{mydef}[Dead annotation]\label{def:dead:annot}
A \emph{dead annotation} is an annotation, or a conjunct of an annotation, that can be removed from a verified Dafny method as long as the method
still verifies after the annotation has been removed.
\end{mydef}

\subsection{Overall approach}
In the next sections we will provide details of the algorithms, architecture and implementation of the DARe tool.
DARe essentially works by removing annotations from the program text and checks 
if the program still verifies. Here, we will describe how the various elements are handled and illustrate it with examples from \cite{DafnyLib}\footnote{See \S \ref{sec:eval:data} for details about the examples used.}.
This will give an indication of the type of simplifications that can be done by DARe and those that are not handled. Note that all simplification has been done by the tool described in \S \ref{sec:tool}.

\subsection{Lemma calls} The simplest cases are lemma calls. They are simply removed from the program text. To illustrate, consider the implementation of Fibonacci from \cite{DafnyLib}:
\begin{lstlisting}
function Fib(n: nat): nat
{
  if n < 2 then n
  else Fib(n - 2) + Fib(n - 1)
}
\end{lstlisting}
The property that \scode{n} is divisible by \scode{3} is equivalent to that its Fibonacci number is divisible by \scode{2}, is then expressed and proven as follows:
\begin{lstlisting}
lemma FibLemma(n: nat)
  ensures Fib(n) % 2 == 0 <==> n % 3 == 0
  decreases n
{
  if n < 2 {
  } else {
    FibLemma(n - 2);
    FibLemma(n - 1);
  }
}
\end{lstlisting}
The proof follows the same structure as the definition of \scode{Fib} with two recursive calls. However, this is something Dafny will do automatically, so DARe is able to simplify this to\footnote{
In this case, Dafny is actually able to prove the lemma fully automatically, so the body could be removed. We have taken a conservative approach to our definition of annotations (see Definition \ref{def:annot}),
meaning we do not attempt to remove such programming constructs. This is however something we may reconsider in the future (see \S \ref{sec:concl}).
}:
\begin{lstlisting}
lemma FibLemma(n: nat)
  ensures Fib(n) % 2 == 0 <==> n % 3 == 0
{
  if n < 2 {
  } else {
  }
}
\end{lstlisting}

\subsection{Assertions} 

As with lemma calls, DARe first try to remove assertions. This was illustrated for  \scode{set_inter_empty_contr} above, where three assertions were removed.

If it is not possible to remove an assertion, then we we do not want to separate the equivalent cases illustrated by  
\begin{lstlisting}
  assert A && B;
\end{lstlisting}
and
\begin{lstlisting}
  assert A;  
  assert B;
\end{lstlisting}
For the first case, conjunctions are handled by first breaking the expressions up, then removing them one at a time, and re-combining them at the end. 

\subsection{Variants} 

DARe attempts to remove variants. This is illustrated above for \scode{FibLemma}, with further examples below. 

\subsubsection{`Wild-card' variants (\scode{decreases *})}  

There is a special case for what we call \emph{wild-card variants} (\scode{decreases *}). These are used for programs that may not terminate.
For example, when working with infinite sequence (using co-inductive datatypes) or 
control systems with a top-level non-terminating loop. There may also cases where a developer may decide to delay, or even ignore, a proof of termination. 

To support such cases, a wild-card (\scode{*}) can be used in the decrease clause to tell Dafny to ignore proving termination. For example, the following non-terminating
methods is valid:
\begin{lstlisting}
method InFinite()
 decreases *
{
 while true
  decreases *
 { }
}
\end{lstlisting}
Note that the wild-card has to be used consistently -- if a method has a potentially non-terminating loop then the method needs to be declared to be potentially non-terminating, as illustrated above.

One could potentially use DARe to remove unnecessary wild-cards. However, in our data sets there were very few examples of this, and those that were there could not be further simplified. 
We will therefore not give this type of annotation much attention.

\subsection{Invariants} 

Invariants are handled the same way as assertions. To illustrate, consider again the \scode{BinarySearch} example. Firstly, the variant can be removed. The first two conjuncts of 
\begin{lstlisting}
  invariant 0 <= low && low <= high && high <= a.Length
\end{lstlisting}
can also be removed. This is achieved by DARe by treating it as:
\begin{lstlisting}
  invariant 0 <= low 
  invariant low <= high
  invariant high <= a.Length
\end{lstlisting}
and then remove the first two invariants. As a result we are left with:
\begin{lstlisting}
 while low < high
  invariant high <= a.Length
  invariant forall i :: 0 <= i < a.Length && !(low <= i < high) ==> a[i] != value
\end{lstlisting}
Another example is the following implementation of a method \scode{Max}, then returns the index of the largest element in a given array:
\begin{lstlisting}
method Max(a: array<int>) returns(t: int) 
  requires a != null
  requires a.Length > 0
  ensures forall i :: 0 <= i < a.Length ==> a[i] <= t
{
  var i: int := 1;
  var max: int := 0;
  while(i < a.Length)
   invariant i <= a.Length
   invariant max < a.Length
   invariant i > 0 && max >= 0
   invariant forall j :: 0 <= j < i ==> a[j] <= a[max]
   decreases a.Length-i
  {
    if(a[i] > a[max]) { max := i; }
    i := i + 1;
  }
  return a[max];
}
\end{lstlisting}
Here, the variant 
\begin{lstlisting}
  decreases a.Length-i
\end{lstlisting}
and the invariant
\begin{lstlisting}
  invariant i > 0 && max >= 0
\end{lstlisting}
can be removed, giving the following \scode{while}-loop header:
\begin{lstlisting}
 while(i < a.Length)
  invariant i <= a.Length
  invariant max < a.Length
  invariant forall j :: 0 <= j < i ==> a[j] <= a[max]	
\end{lstlisting}

\subsection{Calculations} 

We have seen an example of a calculation that was used to chain together implications. Such proofs consists of chain together a sequence of facts, possibly combined with an operator and/or a justification/hint.

The simplest case for DARe is to remove complete calculations, but this may not be possible. For those cases, we try to simplify them. Calculations are used to write out the chain of reasoning in order to connect
the first element of the calculation with the last element. For example, in the calculation of \S \ref{sec:dafny} it is shown how to derive \scode{false} (last element) from \scode{x in A * B} (first element). We
therefore keep the first and last element, and try to remove all steps and hints between. This is achieved by splitting them up into individual elements, remove as many as possible, and then combine the remaining elements.  For the calculation example in \S \ref{sec:dafny}, we can remove one element (\scode{x in \{\}}), creating the simplified calculation:
\begin{lstlisting}
 if x in B {
  calc ==>{
   x in A * B;
   { set_eq_simple(A*B,{},x);}
   false; 
  }
 }
\end{lstlisting} 

For a more involved example, consider the following representation of Peano arithmetic, with a recursive definition of addition\footnote{The example is taken by a course on `'Verified programming in Dafny' by Will Sonnex and Sophia Drossopoulou.}:
\begin{lstlisting}
datatype natr = Zero | Succ(natr)

function add(x: natr, y:natr): natr
{
  match x
   case Zero => y
   case Succ(x') => Succ(add(x', y))
}
\end{lstlisting} 
The following lemma proves that addition commutes: 
\begin{lstlisting}
lemma prop_add_comm(x: natr, y: natr)
	ensures add(x,y) == add(y,x)
{
  match x {
   case Zero =>
    calc {
      add(Zero, y);
     ==
      y;
     == { prop_add_Zero(y); }
      add(y, Zero);
    }
   case Succ(x') =>
    calc {
      add(x, y);
     == {assert x == Succ(x');}
      add(Succ(x'), y);
     ==
      Succ(add(x', y));
     == {prop_add_comm(x', y);}
      Succ(add(y, x'));
     == {prop_add_Succ(y, x');}
      add(y, Succ(x'));			
    }
  }
}
\end{lstlisting}
As the \scode{add} function is defined recursively over the first argument, the proof follows the same structure by a
case analysis (\scode{match} statement) of the first arguments. As we will see, this is the first step of a proof by \emph{structural induction} \cite{burstall69}.

For each case, the proof is achieved by a calculation. For the \scode{Zero} case, the proof has two steps: first that we can remove adding \scode{Zero} and the second
that it can be added to the second argument, which completes the proof. The second step is justified by a separate lemma, which is proven automatically:
\begin{lstlisting}
lemma prop_add_Zero(x: natr)
	ensures add(x, Zero) == x
{}
\end{lstlisting}
The other case (\scode{Succ}) follows a similar strategy by first moving \scode{Succ} outside \scode{add}, apply the induction hypothesis \scode{prop_add_comm(x', y)} to
commute the arguments and the move \scode{Succ} into the second argument. The last step justified by the following lemma, that does not require any proof guidance:
\begin{lstlisting}
lemma prop_add_Succ(x: natr, y: natr)
	ensures Succ(add(x,y)) == add(x, Succ(y))
{}
\end{lstlisting}
This proof is unnecessary detailed, and DARe can simplify it to the following:
\begin{lstlisting}
lemma prop_add_comm(x: natr, y: natr)
  ensures add(x, y) == add(y, x)
{
  match x {
   case Zero =>
   case Succ(x') =>
    calc {
      add(x, y);
     ==
      Succ(add(y, x'));
     ==
      add(y, Succ(x'));
    }
  }
}
\end{lstlisting}
The first observation is that the first calculations is not needed so DARe has removed it. For the second calculation, the second step can be removed together with all 
justifications\footnote{This example also illustrates another potential simplification that we do not apply due to our more conservative definitions of annotations. Here, calls
to two auxiliary lemmas are removed. If there are no other calls to them, they can be safely removed. However, as some lemmas may show a useful property and are not used just to 
help a proof we have decided to keep all of them. We may revisit this choice in the future.
}.

\subsubsection{Combining operators in calculations}

A calculation can also have combinations of operators. In some cases, the operator is also omitted, in which the given default operators is used (e.g \scode{calc ==> \{ ...\}}). If no such operator is given, equality is used. 
When removing a line of a calculation, DARe will also remove its preceding operator. This simple solution may result in that we do not find the shortest solution. 
To illustrate, consider the following dummy sequence:
\begin{lstlisting}
  calc {
     A;
   ==>
     B;
   == 
     C;
  }
\end{lstlisting}
DARe may try to remove `\scode{==> B;}'. This will result in 
\begin{lstlisting}
  calc {
     A;
   == 
     C;
  }
\end{lstlisting}
which may not hold while the weaker 
\begin{lstlisting}
  calc {
     A;
   ==> 
     C;
  }
\end{lstlisting}
may hold. This means that with our current design, this simpler solution will not be found.  We plan to revisit this in the future, which will most likely involve using a lattice of the 
supported operators when selecting which operator should be kept.

\bs{}
\section[The DARe tool]{The DARe tool\footnote{The source code 
is available from \cite{DAReWebpage}.}}\label{sec:tool}
\as{}

\begin{figure}
\begin{center}
\includegraphics[width=0.7\textwidth]{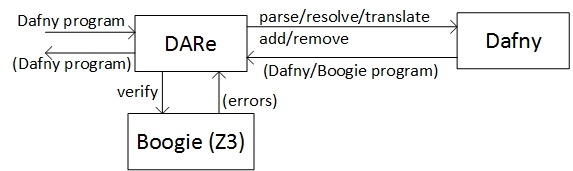}
\end{center}
\caption{Architecture of the DARe tool}\label{fig:arch}
\end{figure}


We have implemented a tool called \emph{DARe} that removes dead annotations as specified in the previous section. The overall architecture of the tool is shown in Figure \ref{fig:arch}. DARe takes as input a Dafny program and returns a new Dafny program where the dead annotations have been removed. It uses Dafny to \emph{parse}, \emph{resolve} and \emph{translate} the program to Boogie. Resolving a program involves some pre-processing and type checking. Dafny stores a program in an AST. DARe will \emph{remove} (and \emph{add} back) annotations from the AST. Dafny returns an updated Dafny program, or a translated Boogie program (if the \emph{translate} command is given). The Boogie program is sent to Boogie for verification, which may return errors. Such errors are analysed and acted upon by DARe as explained next.


We have developed several different algorithms to remove the dead annotations. We give a theoretical comparison of them below, with an empirical evaluation in \S \ref{sec:results}. The first version, called \algname{SimpleDARe},
is as follows:

\begin{algorithmic}
\Procedure{SimpleDARe}{$prog$}
\ForAll{$m \in \Call{Methods}{prog}$}
\State $pos \gets \Call{Start}{m,prog}$
\While{$\Call{hasAnnot}{m,pos}$}
  \State $pre \gets \Call{MethodBody}{m}$
  \State $pos \gets \Call{RemoveNextAnnot}{m,pos,prog}$
  \State $err \gets \Call{Verify}{prog}$
  \If{$\Call{HasError}{err,m}$}
 \State $\Call{ReplaceMethodBody}{prog,m,pre}$
 \EndIf
\EndWhile
\EndFor
\EndProcedure
\end{algorithmic}

\algname{SimpleDARe} steps through the program, method by method. For each method, it finds the initial position. As long as there are more annotations left from a given
position in the method, it will 
remove the next annotation (\algname{RemoveNextAnnot}) from the given position in the given method of the program. This will return the position just after that
annotation.  Note that there will be a side-effect on the program $prog$, where the next annotation has been removed (within method $m$).
It then calls \algname{Verify} to see if Boogie complains. If it reports any errors, then the removal is undone by replacing the body of method $m$ with the body before the annotation was removed
($pre$). This is achieved by \algname{ReplaceMethodBody}. 


Consider a program with the following assertions:
\begin{lstlisting}
assert P;   assert Q;   assert R;
\end{lstlisting}
Assume that the verifier needs either \scode{P}, or \scode{Q} and \scode{R} \big(e.g. \scode{P <==> (Q && R})\big). 
 \algname{SimpleDARe} will remove \scode{assert P} and keep 
\begin{lstlisting}
            assert Q;   assert R;
\end{lstlisting}
The simplest, or shortest, solution would instead have been to keep `\scode{assert P}', as we could then have removed both 
`\scode{assert Q}' and `\scode{assert R}'. 

To overcome such issues, we developed a more complete version, called
\algname{CompleteDARe}. For simplicity, we describe this algorithm recursively:

\begin{algorithmic}
\Procedure{CompleteDARe}{$prog$}
\ForAll{$m \in \Call{Methods}{prog}$}
\State $pos \gets \Call{Start}{m}$
\State $\Call{MethodDARe}{m,pos}$
\EndFor
\EndProcedure  \\

\Procedure{MethodDARe}{$m,pos$}
\If{$\Call{hasAnnot}{m,pos}$} 
  \State $pre \gets \Call{MethodBody}{m}$
  \State $pos \gets \Call{RemoveNextAnnot}{m,pos,prog}$
  \State $err \gets \Call{Verify}{prog}$
  \If{$\Call{HasError}{err,m}$}
  \State $\Call{ReplaceMethodBody}{prog,m,pre}$
  \State $\Call{MethodDARe}{m,pos}$
  \Else 
  \State $\Call{MethodDARe}{m,pos}$
  \State $s_1 \gets \Call{Size}{m,prog}$
  \State $m_1 \gets \Call{MethodBody}{m}$
  \State $\Call{ReplaceMethodBody}{prog,m,pre}$
  \State $\Call{MethodDARe}{m,pos}$
  \State $s_2 \gets \Call{Size}{m,prog}$
  \If{$s_1 < s_2$}
    \State  $\Call{ReplaceMethodBody}{prog,m,m_1}$
   \EndIf
 \EndIf
\EndIf 
\EndProcedure
\end{algorithmic}

The main difference with \algname{SimpleDare} is that when there are no errors, \algname{CompleteDARe} will try both cases of keeping and removing the guidance, 
and keep the version that is shortest. In the above case it would have kept the first assertion and deleted the other two assertions.

The improved completeness achieved by
\algname{CompleteDARe}, results in a drastic increase in (worst case) runtime. We can see that \algname{SimpleDARe} steps through
each guidance one by one, and thus has a $O(N)$ runtime, where $N$ is the number of annotations. For \algname{CompleteDARe}, on the other hand, this is increased to $O(2^N)$ as all combinations are checked.

The increased runtime is undesirable, in particular for our interactive extension, discussed in \S \ref{sec:ide}, where rapid feedback is essential. For
example, in a recent empirical study on the use of program analysis \cite{Christakis16}, the majority of developers ($375$ software engineers at Microsoft) said they would sacrifice intricate issues
for fast feedback, when the analysis is run in real-time (as is the case in \S \ref{sec:ide}). 

We therefore did an experiment by running both \algname{SimpleDare} and \algname{CompleteDARe} on $117$ examples from the Dafny library (see \S \ref{sec:results}) for assertions and invariants. There was only one example where different results were returned, which had the same number of annotations but different ones had been deleted. 
Based on this observation, and the study in \cite{Christakis16}, we decided to sacrifice completeness for speed, and therefore discarded the \algname{CompleteDARe} version.

The main bottleneck for runtime is the time that the verifier uses; this is studied in detail in \cite{Grov2016,tumasthesis16}. In fact, the time DARe (or Dafny) uses to
parse, change, resolve and translate the AST is so insignificant compared with Boogie that it can be ignored.
Whilst Boogie itself has some built in parallelisation \cite{leino2014dafnyIDE}, it is not (currently) possible to apply Boogie in parallel.
Still, one optimisation  is to reduce the number of calls to Boogie, which is achieved by  \algname{CombinedDare}:

\begin{algorithmic}
\Procedure{CombinedDare}{$prog$}
\ForAll{$m \in \Call{Methods}{prog}$}
  \State $pos_m \gets \Call{Start}{m}$
\EndFor
\While{$\exists m \;\cdot\; \Call{hasAnnot}{m,pos_m}$}
  \ForAll{$m \in \Call{Methods}{prog}$}
  \State $pre_m \gets \Call{MethodBody}{m}$
  \State $pos_m \gets \Call{RemoveNextAnnot}{m,pos_m,prog}$
  \EndFor
  \State $err \gets \Call{Verify}{prog}$
  \ForAll{$m \in \Call{Methods}{prog}$}
  \If{$\Call{HasError}{err,m}$}
 \State $\Call{ReplaceMethodBody}{prog,m,pre_m}$
 \EndIf
 \EndFor
\EndWhile
\EndProcedure
\end{algorithmic}


The difference from \algname{SimpleDare} is that each method will in parallel remove a single annotation. It then applies \algname{Verify} (Boogie) to all methods, and keeps the changed methods that do not have any related verification errors. Note that  \algname{RemoveNextAnnot} will do nothing, and just return the same program and position, when the end of the method is reached.

\subsection{Properties of DARe}

\algname{CombinedDare} has no theoretical improvements in worst case runtime from \algname{SimpleDARe}. However, it may in practice as the main bottleneck is
the number of calls to the verifier. Let $\algname{NumAnnots}(m)$ compute the number of annotations for method $m$. We can then summarise the number of calls to the verifier for each algorithm as follows:
\begin{center}
\begin{tabular}{ ccc }
\textbf{Algorithm} & $\quad$& \textbf{Number of calls to verifier} \\
\hline
\algname{SimpleDARe} && $\Sigma_{m \in \algname{Methods}(prog)}{\algname{NumAnnots}(m)}$ \\
\algname{CompleteDARe} && $\Sigma_{m \in \algname{Methods}(prog)}{2^{\algname{NumAnnots}(m)}}$ \\
\algname{CombineDare} && $\algname{Max}\big(\bigcup_{m \in \algname{Methods}(prog)}{\algname{NumAnnots}(m)}\big)$ \\
\end{tabular}
\end{center}
For \algname{SimpleDARe} this is the same as the total number of annotations in the program text. As \algname{CompleteDare} is run method-by-method, the
number of calls is not quadratic on the number of annotations in the program. It is exponential for each method, which is then summed up.  The interesting case is \algname{CombineDare}. As each methods is run in parallel, the number of calls is the number of annotations of the method with the maximum number of annotations.


Below we outline a few simple yet important properties of DARe and the various implementations of the approach. We assume that \algname{RemoveNextAnnot} is correct
with respect to Definition \ref{def:annot}, and that all annotation are tried\footnote{We discuss implementation issues of \algname{RemoveNextAnnot} in \S \ref{sec:tool:design}.}.
We use DARe to cover all algorithms. 
\begin{myprop}
DARe does not change the behaviour or contract of a program. 
\end{myprop}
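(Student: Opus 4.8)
The plan is to split the statement into its two independent claims --- that DARe preserves the program's runtime behaviour, and that it preserves its contract --- and to discharge each by appealing to Definition~\ref{def:annot} together with the standing assumption that \algname{RemoveNextAnnot} is correct with respect to that definition. The common observation underpinning both claims is that every modification DARe ever makes is performed by two operations: \algname{RemoveNextAnnot}, which by assumption only ever deletes an annotation (or a conjunct of one), and \algname{ReplaceMethodBody}, which only restores a method body DARe itself produced at an earlier step. It therefore suffices to establish the two claims for the deletion of a single annotation and then lift them to whole programs by induction over the sequence of removals; every intermediate method body differs from the original solely by such deletions. This argument is uniform across \algname{SimpleDARe}, \algname{CompleteDARe} and \algname{CombinedDare}, since the three algorithms differ only in which annotations they attempt and in which restorations they make, never in the \emph{kind} of syntactic object they touch.

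For the contract claim I would argue that contracts simply do not appear among the removable objects. Definition~\ref{def:annot} enumerates annotations as lemma calls, variants, invariants, assertions, calculations, calculation steps and calculation hints; preconditions and postconditions are deliberately excluded and are classified as specification in \S\ref{sec:dare}. Since \algname{RemoveNextAnnot} is correct with respect to Definition~\ref{def:annot}, it never selects a contract (nor a conjunct of one) for removal, and \algname{ReplaceMethodBody} only reinstates earlier method bodies, which by the same reasoning still contained the original contracts verbatim. Hence no contract clause is ever deleted, added or weakened, and the contract is preserved exactly.

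For the behaviour claim the key fact is that each category of annotation in Definition~\ref{def:annot} is ghost or purely logical and is therefore erased by the Dafny compiler: lemma calls invoke ghost methods, which are not compiled; assertions, invariants and variants are discharged statically and emit no executable code; and calculations are proof guidance of the same nature. The compiled program is thus a function solely of the non-ghost constructs --- variable declarations, assignments, loops, conditionals and the executable parts of method bodies --- which DARe never removes (these are explicitly kept, per \S\ref{sec:dare}). Removing an annotation, or a conjunct of one, therefore leaves the erased executable code unchanged, so the runtime behaviour is identical. I would note that this direction does not even require the verification check to succeed: erasure makes behaviour preservation unconditional, while the \algname{HasError}/\algname{ReplaceMethodBody} machinery separately guarantees the result still verifies.

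The main obstacle is the erasure property itself --- that \emph{every} annotation category is genuinely non-executable, and that removing a conjunct of an annotation yields again a (logically weaker) object of the same, still-ghost, kind rather than accidentally producing executable code or a contract. Establishing this rigorously means relying on Dafny's ghost-erasure semantics and on the classification fixed in \S\ref{sec:dare}, and checking it case by case over the seven annotation forms; once that is in place, both claims follow structurally.
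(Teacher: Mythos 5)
Your proof is correct and takes essentially the same route as the paper's: the paper's own (two-sentence) proof likewise observes that the only change DARe ever makes is via \algname{RemoveNextAnnot}, which by assumption removes only annotations in the sense of Definition~\ref{def:annot}, and concludes that both the program and its contract are untouched. Your additional detail --- the explicit behaviour/contract split, the ghost-erasure argument for why removed annotations emit no executable code, and the observation that \algname{ReplaceMethodBody} only restores previously saved bodies --- merely fills in what the paper leaves implicit.
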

\begin{proof}
The only changes to a program is done by \algname{RemoveNextAnnot}, which we have assumed will remove annotations. Thus, both the program and its contract is left unchanged.
\end{proof}
DARe will only be applied to verified methods, meaning all stated properties hold. Thus, it cannot make any false properties true (as there are not any). This is also
the case for the dual:
\begin{myprop}
Ignoring runtime environments, DARe will not make any true properties false.
\end{myprop}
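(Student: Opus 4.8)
The plan is to reduce the statement to a single invariant maintained by every DARe algorithm: at any point after DARe has committed its changes, the current program still verifies. Once this invariant is established, the proposition follows by combining it with the previous proposition (which guarantees that neither the behaviour nor the contract of the program is altered) and with the soundness of the Dafny/Boogie/Z3 pipeline. Throughout I read a \emph{true property} of a verified method as a stated property of its contract that holds; since the input is a verified program, every such property holds initially, and the goal is to show that none of them is falsified by DARe.

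First I would isolate the common shape of \algname{SimpleDARe}, \algname{CompleteDARe} and \algname{CombinedDare}: each repeatedly calls \algname{RemoveNextAnnot}, then \algname{Verify}, and — whenever \algname{Verify} reports an error attributable to a method — restores that method's body via \algname{ReplaceMethodBody}. I would then prove, by induction on the number of removal attempts, the invariant that after each attempt (including the restoration step) the program verifies. The base case is the assumption that the input is a verified program. For the inductive step there are two cases: if \algname{Verify} reports no relevant error, the mutated program verifies by definition of \algname{Verify}; otherwise \algname{ReplaceMethodBody} reinstates the immediately preceding body, which verified by the induction hypothesis. For \algname{CompleteDARe} the same reasoning applies to both explored branches, and the branch ultimately kept is one that verified. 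For \algname{CombinedDare} the per-method restorations are independent, and — crucially — because removing an annotation from a method never changes that method's contract (the previous proposition), verification decomposes method-by-method and no removal inside one method can introduce a fresh error in another; hence attributing and reverting errors per method is sound. So at termination the output program verifies.

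With the invariant in hand, I would close the argument in two steps. By the previous proposition the compiled behaviour and the contract are unchanged, so any true property concerning the program's runtime behaviour is preserved verbatim. For the verification-level properties, the output verifies against the same (unchanged) contract, and by soundness of the verifier a program that verifies satisfies its stated properties; therefore every contract property that held before still holds. Since DARe only ever discards annotations whose removal leaves the contract verifying, no stated true property can be turned false.

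The main obstacle — and the reason for the hedge \emph{ignoring runtime environments} — is the gap between \emph{verifies} and \emph{is true}. The argument above treats \algname{Verify} as a sound, deterministic oracle for provability, whereas in practice the SMT back-end is incomplete and sensitive to timeouts and heuristic choices, so that removing a seemingly dead annotation could cause a semantically true contract to stop verifying, or a previously verifying one to fail spuriously. The clean statement therefore holds only under the abstraction that \algname{Verify} faithfully decides the stated properties, and I would make this assumption explicit and observe that it is exactly what the caveat about runtime environments licenses. A secondary point to note is that there may be semantically true properties that are not part of the contract and are never re-checked by \algname{Verify}; these fall outside the notion of ``stated properties'' and so are not claimed by the proposition.
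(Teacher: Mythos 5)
Your proposal matches the paper's own argument: both proceed by induction on removal steps, using the fact that any removal which causes a verification error is immediately undone by restoring the previous method body, so a program that verified initially verifies after every step; and both treat verifier time-outs as the exception excluded by the ``ignoring runtime environments'' caveat. Your write-up is simply a more explicit version of the paper's proof (spelling out the per-algorithm cases, the appeal to the preceding proposition, and verifier soundness), not a different route.
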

\begin{proof}
The only changes DARe makes is to remove an annotation. If a method reports a verification error after removing an annotation then this is added back again to the program.
As a method is assumed to be initially verified, it follows by induction that it will remain verified for each step. 
The only exception is that DARe may result in a longer proof. As the verifier may time-out that means that it may introduce new time-outs on other runtime environments. However, this
case is ignored in the assumption.
\end{proof}
Next we compare the functional properties of the three algorithms.
\begin{myprop}
\algname{CombineDare} and \algname{SimpleDARe} will remove the same annotations. 
\end{myprop}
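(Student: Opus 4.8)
The plan is to show that, method by method, both algorithms consider exactly the same annotations in the same order and make the same keep/remove decision for each, so that the resulting method bodies — and hence the removed annotations — coincide. The two algorithms differ only in \emph{when} they call the verifier: \algname{SimpleDARe} verifies after each individual removal, whereas \algname{CombineDare} removes one annotation from every method and then verifies once. Both, however, start from \algname{Start}$(m)$ and advance through $m$ using the same \algname{RemoveNextAnnot}, and both undo a removal exactly when \algname{HasError}$(\mathit{err},m)$ holds. The whole argument therefore reduces to establishing that this per-method error test yields the same verdict in both algorithms despite the differing global state of the program at the moment \algname{Verify} is called.

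The key enabler is the \emph{modularity} of Dafny verification: each method is verified against the \emph{contracts} of the methods it refers to, never their bodies. Since, by the earlier proposition that DARe leaves every contract unchanged, and since \algname{RemoveNextAnnot} only ever edits method bodies, the predicate \algname{HasError}$(\mathit{err},m)$ is a function of the body of $m$ alone (with all contracts and the remaining program structure fixed). I would isolate this as a lemma. Its immediate consequence is that the parallel removals performed by \algname{CombineDare} cannot interfere across methods: if, say, a removal temporarily breaks a lemma $L$, the caller of $L$ is unaffected, because it depends only on $L$'s (unchanged) postcondition, and \algname{CombineDare} restores $L$ independently in the same iteration.

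With the modularity lemma in hand, I would argue by induction over the annotations of a fixed method $m$, taken in the order produced by \algname{RemoveNextAnnot}. The invariant is that, just before the $k$-th annotation of $m$ is considered, the body of $m$ is identical under the two algorithms. The base case holds because $m$ is untouched until its first annotation is reached. For the step, both algorithms remove the same next annotation (identical bodies and positions feed identical calls to \algname{RemoveNextAnnot}), producing the same new body; by the lemma, \algname{HasError}$(\mathit{err},m)$ returns the same verdict in both — even though the rest of the program is in different states — so both either keep the removal or restore $m$ to its previous body. Either way the bodies agree again, re-establishing the invariant. Hence the final body of $m$, and therefore the set of annotations removed from $m$, is the same; ranging over all methods gives the proposition.

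I expect the modularity lemma to be the main obstacle, since it is the only place where something about Dafny's verifier — rather than the pseudocode — must be invoked. The delicate point is cross-method dependencies through lemma calls: one must argue that breaking a callee's body during \algname{CombineDare}'s single \algname{Verify} call cannot flip the error status of a caller, which rests precisely on callers seeing only callee contracts. I would therefore make explicit the assumption that \algname{Verify} and \algname{HasError} attribute each error to the method whose body is responsible for it, which is exactly the modular attribution that the \algname{HasError}$(\mathit{err},m)$ signature already presupposes.
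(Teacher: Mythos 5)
Your proposal is correct, and its skeleton---stepping through each method's annotations in the order produced by \algname{RemoveNextAnnot} and showing by induction that both algorithms make identical keep/remove decisions---matches the paper's argument. Where you genuinely differ is in what you treat as the crux. The paper's proof is essentially two sentences: it observes that \algname{CombineDare} merely batches the same per-method removal sequence to save verifier calls, and that the only divergence arises when one method runs out of annotations before the others, in which case \algname{RemoveNextAnnot} acts as the identity. It never addresses the point you rightly single out as the main obstacle: when \algname{CombineDare} calls \algname{Verify}, the \emph{other} methods are in a different state than when \algname{SimpleDARe} calls \algname{Verify}, so one must know that \algname{HasError}$(\mathit{err},m)$ depends only on the body of $m$ (with contracts fixed). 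Your modularity lemma---Dafny verifies each method against callees' contracts, never their bodies, and the paper's first proposition guarantees contracts are untouched---is exactly the justification the paper assumes silently; without it, the claim that batched verification yields the same per-method verdicts does not follow. What your route buys is an argument that actually rules out cross-method interference through lemma calls, and it meshes with the paper's later remark that Boogie caches verification results per method. What the paper's version includes and you elide is the end-of-method case: to conclude that $m$'s final body agrees under both algorithms, you also need the stated assumption that \algname{RemoveNextAnnot} behaves as the identity once a method's annotations are exhausted, so that \algname{CombineDare}'s extra iterations (driven by other methods' remaining annotations) leave $m$ unchanged; this is a minor addition that slots directly into your induction.
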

\begin{proof}
For a given method, both algorithms sequentially steps through the methods and tries to remove the next annotation using \algname{RemoveNextAnnot}. The only difference is that 
\algname{CombineDare} will do this for all methods at the same time to reduce the number of calls to the verifier. The only case where they deviate is when there are no more 
annotations left in a method.  \algname{CombineDare} may then still make calls to  \algname{RemoveNextAnnot}, if there are other methods that have further annotation. However, in this
case  \algname{RemoveNextAnnot} is assumed to behave as an identity, thus the method will be left unchanged.
\end{proof}

One may think that \algname{CompleteDARe} somehow generalises \algname{SimpleDARe} and \algname{CombineDare}. However, this is not the case as it may remove completely different annotations
than the other algorithms. The goal with \algname{CompleteDARe} is to remove as many as possible, thus we have the property:
\begin{myprop}
\algname{CompleteDARe} will remove at least as many annotations as \algname{SimpleDARe} and \algname{CombineDare}. 
\end{myprop}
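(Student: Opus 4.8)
The plan is to first collapse the two-way comparison into a single one and then argue by induction. Since the preceding proposition establishes that \algname{CombineDare} and \algname{SimpleDARe} remove exactly the same annotations, it suffices to prove that \algname{CompleteDARe} removes at least as many as \algname{SimpleDARe}. Taking \algname{Size} to count the annotations remaining in a method (so that removing more annotations is the same as producing a body of smaller \algname{Size}), this is equivalent to showing that the body \algname{CompleteDARe} retains is never larger than the one \algname{SimpleDARe} retains, when both are started from the same verified method.

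For a fixed method $m$, I would write $S(pos)$ and $C(pos)$ for the \algname{Size} of the body produced by \algname{SimpleDARe} and by \algname{MethodDARe}, respectively, when each is started from the same program state at position $pos$. The goal is $C(pos) \le S(pos)$, proved by induction on the number of annotations at or after $pos$. This measure strictly decreases at every recursive call, because \algname{RemoveNextAnnot} advances $pos$ past the current annotation and neither removing nor restoring it alters the annotations that follow. The base case is immediate: when no annotation remains, both algorithms act as the identity, so $C(pos)=S(pos)$.

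For the inductive step I would split on whether removing the annotation at $pos$ triggers a verification error. If it does, both algorithms restore the body via \algname{ReplaceMethodBody} and recurse from the advanced position on the identical state, so the inductive hypothesis gives $C(pos) \le S(pos)$ at once. If it does not, \algname{SimpleDARe} commits to the removal and continues, so $S(pos)=S_{\mathrm{rm}}$, the \algname{SimpleDARe} result on the continuation with the annotation removed. By contrast \algname{MethodDARe} computes both $s_1=C_{\mathrm{rm}}$ (annotation removed) and $s_2=C_{\mathrm{keep}}$ (annotation restored) and keeps the smaller, so $C(pos)=\min(s_1,s_2)\le s_1=C_{\mathrm{rm}}$. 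The inductive hypothesis applied to the continuation-with-annotation-removed subproblem gives $C_{\mathrm{rm}}\le S_{\mathrm{rm}}=S(pos)$, and chaining the two inequalities yields $C(pos)\le S(pos)$, closing the induction. Summing \algname{Size} over all methods then gives the claimed result for whole programs.

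The main obstacle is not any deep argument but the careful bookkeeping of program states: I must verify that the continuation-with-annotation-removed subproblem explored in \algname{MethodDARe}'s first recursive call is exactly the state on which \algname{SimpleDARe} proceeds, so that the inductive hypothesis genuinely applies, and that the additional \emph{keep} branch can only ever help through the $\min$ and never forces a larger body. A secondary point to record is that the induction measure remains well defined even when \algname{RemoveNextAnnot} splits conjuncts; since the propositions treat \algname{RemoveNextAnnot} as an atomic step that advances the position, counting the units it processes suffices, and the monotone relationship between \algname{Size} and the number of removed annotations makes the size comparison and the annotation-count comparison coincide.
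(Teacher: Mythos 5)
Your proposal is correct and is essentially the paper's argument made rigorous: both rest on the observation that \algname{CompleteDARe}'s branching subsumes the execution path of \algname{SimpleDARe} (error branch identical, and in the no-error case the $\min$ over the two sub-results can only improve on the remove-branch that \algname{SimpleDARe} commits to). The paper states this in a single informal sentence---\algname{CompleteDARe} ``tries all possible combinations, including those of \algname{SimpleDARe} and \algname{CombineDare}, and always picks the result that removes the most''---whereas you formalize it by induction on the remaining annotations and dispatch \algname{CombineDare} via the preceding proposition; this is a sharper rendering of the same idea (and avoids the paper's slightly loose ``all combinations'' claim, since \algname{CompleteDARe} actually prunes branches whose removal immediately fails), but not a different proof strategy.
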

\begin{proof}
As \algname{CompleteDARe} will try all possibly combinations, including those of  \algname{SimpleDARe} and \algname{CombineDare}, and always pick the result that removes the most annotation, it will
always remove at least as many as the others.
\end{proof}


\subsection{Design choices \& implementation issues}\label{sec:tool:design}

There are variations of how the next annotation is found (by \algname{RemoveNextAnnot}); e.g. we could first remove all assertions, then all invariants, etc, or just remove the next annotation in a sequential manner. We have implemented the latter, although wild-cards in variants and calculations require special care which we will not go into here.

\subsection{Running the DARe tool}

There are two ways to run DARe: 
\begin{enumerate}
\item In the Dafny Visual Studio IDE, which is explained in \S \ref{sec:ide}.
\item As a standalone tool in the command line.
\end{enumerate} 
The advantage of the command line execution is that it can provide various metrics that may be helpful for the user. The command line version has the following syntax:

It can also be used as a command line tool, which can also provide various metrics that may be helpful. Through the command line using the following style for command line parameters: 
\mybox{
\texttt{
> DareTools.exe <Z3 location> <output directory> <list of programs>}
}
The user must provide the location of the Z3 executable (\texttt{<Z3 location>}), the directory where the simplified programs should be stored (\texttt{<output directory>}), and a list of programs
that should be simplified (\texttt{<list of programs>}). The following example illustrates execution:
\mybox{
\begin{tabbing}
\texttt{
> DareTools.exe C:\textbackslash{}Programs\textbackslash{}z3\textbackslash{}z3.exe} \= \texttt{C:\textbackslash{}Dafny} \\ \> \texttt{C:\textbackslash{}Dafny\textbackslash{}Prog.dfy C:\textbackslash{}Dafny\textbackslash{}Progs\textbackslash{}*}
\end{tabbing}
}
When the tool is launched it will tell you what programs were detected so that a user can make sure you entered them correctly.  It will then give the following options:
\begin{enumerate}
 \item \emph{Simplify and print program:}  This will remove all the dead annotations from the program and print a new one.
 \item \emph{Log:} This will remove all the dead annotations from the program and log the details as CVS files. The log will include counting the instances and how many of the different types of annotation is removed.
 \item \emph{Completeness testing:} This will run both \algname{CompleteDare} and \algname{CombineDare} and check if \algname{CombineDare} find the smallest solution.
 \item \emph{Runtime comparison:} This will executes the tree algorithms discussed above, and log the times each takes to run. 
\end{enumerate}
In most cases the first option will be used, and the last third are mainly for evaluation purposes and used for the results discussed in \S \ref{sec:results}.

\bs{}
\section[A semi-automated DARe integrated in the Dafny IDE]{A semi-automated DARe integrated in the Dafny IDE\footnote{A screencast of the DARe IDE is available at \cite{DAReWebpage}.}}\label{sec:ide}
\as{}

There are cases where a developer may want to keep some of the dead annotations as they serve to explain
the code: e.g. a variant to show why a loop terminates, or some key steps in a calculation which explains the correctness
argument. Still, it is likely that many dead annotations should be removed.

To achieve both of these aims, we have developed a \emph{semi-automatic} version of DARe that is integrated into
Dafny's Visual Studio plug-in \cite{leino2014dafnyIDE}. Dafny's IDE adapts the interaction model commonly used by IDE for compilers:
the verifier is applied seamlessly in the background and verification errors are highlighted in the program text, by underlying the 
affected areas by red ``squiggly'' lines, as illustrated on line 31 of Figure \ref{fig:screenshot:highlight} This approach is also recommended in \cite{Christakis16}, where most developers want results
from static analysis shown in the editor.

\begin{figure}[h]
\begin{center}
\includegraphics[width=0.4\textwidth]{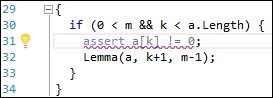} 
\end{center}
\caption{DARe IDE: error highlighting.}\label{fig:screenshot:highlight}
\end{figure}

We have followed the same approach, where DARe is seamlessly applied in the background. When it successfully finishes, all dead annotations will be highlighted 
(underlined and greyed out) as shown on line 31 of Figure \ref{fig:screenshot:highlight}. The user can select which one to remove, and to help automate this she can
press the light-bulb to get the options of removing the given annotation or all annotations of the method or file, as shown in Figure \ref{fig:screenshot:menu}.

\begin{figure}[h]
\begin{center}
\includegraphics[width=0.5\textwidth]{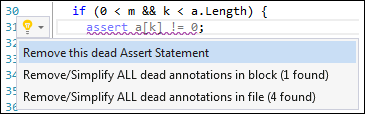}
\end{center}
\caption{DARe IDE: menu to remove dead annotation.}\label{fig:screenshot:menu}
\end{figure}

As DARe will apply the verifier multiple times and is therefore slow compared with Dafny itself, it has been designed to be as unintrusive as possible. It is disabled by
default, and the user can enable it in the Dafny menu\footnote{The main ``pain point'' for static analysis reported in \cite{Christakis16}, is that wrong checks are on by default,
which to some degree justifies our decision to have it disabled by default.}. Figure \ref{fig:screenshot:on} illustrates how to enable it.

\begin{figure}[h]
\begin{center}
\includegraphics[width=0.4\textwidth]{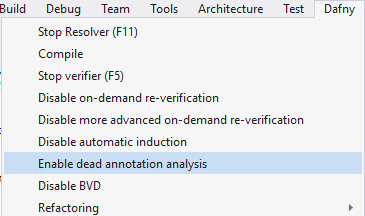} 
\end{center}
\caption{DARe IDE: Enabling Dead Annotation Removal.}\label{fig:screenshot:on}
\end{figure}

Once enabled, DARe will initially be applied to all verified methods after the IDE has remained idle for at least 10 seconds. Once a user starts interacting with the IDE, DARe will terminate (with failure) so that the system does not become 
unresponsive. It will reapply DARe when it has been idle for at least 10 seconds.  The plug-in will keep track of any methods that have changed since last time DARe was applied 
successfully and will reapply DARe to a method only when it has been changed.  Again, this will only happen if the method does not have any verification errors and 
the system remains idle for 10 seconds.

\subsection{Implementation details}

\begin{figure}
\begin{center}
\includegraphics[width=0.45\textwidth]{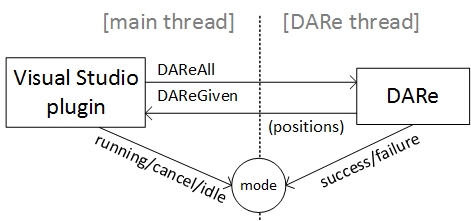}
\end{center}
\caption{Architecture of Visual Studio plug-in integration}\label{fig:idearch}
\end{figure}

The easiest way to achieve the seamless DARe application, is to apply it in a separate thread which is killed when interaction happens. However, this is not possible due
to the issues of applying Boogie in parallel, as elucidated in \S \ref{sec:tool}. Instead, we implemented the architecture shown in Figure \ref{fig:idearch}. We still 
apply DARe in a separate thread, but do not kill it and instead use a shared variable \emph{mode} to control the execution. 

Dafny applies the verifier and waits for the results -- typically with a timeout of 10 seconds. If the program is changed while the verifier runs, it still waits for the
previous verification task to complete before a new one is started \cite{leino2014dafnyIDE}. Our extension has to fit with this model, and works as follows. The plug-in keeps
track of which methods have changed, and when it is decided that DARe should be run, then: either \textsf{DAReAll} is called to apply DARe to all methods (initially);
or \textsf{DAReGiven} is applied, provided with the names of the methods that have changed since last time. It will then set the \textsf{mode} to \textsf{running}. DARe will then execute, and on termination
return the positions in the AST of the annotations that can be removed and set \textsf{mode} to \textsf{success}. If the user starts interacting while DARe is running, then the IDE will set \textsf{mode} to \textsf{cancel}. The \algname{CombinedDare} algorithm of \S \ref{sec:tool} is updated so that for each step, it will check the value of \textsf{mode} before running Boogie: if it is set to \textsf{cancel}, then it will terminate with failure and set the \textsf{mode} to \textsf{failure}. When the plug-in receives the results (either failed or successful), it will set \textsf{mode} back to \textsf{idle}. To avoid conflicts, we have restricted Dafny to only apply the verifier when the mode is \textsf{idle}.

The plug-in consists of two components:
\begin{itemize}
    \item The \emph{tagger} handles the communication with DARe and highlights annotations that can be removed in the program text.
    \item The \emph{action} deals with the actual removal of annotations from the program text when the developer chooses to do so.
\end{itemize}

\bs{}
\section{Evaluation}\label{sec:results}
\as{}


Several examples of the kinds of annotations that can be removed by the DARe tool is shown in \S \ref{sec:dare}. Note that all the removals discussed there has been created by the DARe tool.
In this section we give the results of applying the different algorithms to a larger set of examples. The evaluation focuses on the amount of annotations that can be removed, the tool's recall,
and runtime issues.  The experiments were conducted using Windows 10 with an Intel CORE i5 processor and 16GB RAM.
A complete set of results, including all the programs (before and after DARe was applied) can be found on a dedicated webpage \cite{DAReWebpage}. We first discuss the evaluation data used.


\subsection{Evaluation data \& approach} \label{sec:eval:data}

In order to evaluate the DARe tool we applied it to examples from the Dafny library and test suite \cite{DafnyLib}. The suite combines programs used purely for testing with actual developments. 
The full set is used for regression testing. Developments include solutions to verification challenges and competitions, such as VerifyThis and VSComp. 

We removed all the programs we knew where only used for testing; they are predominantly found in a separate folder in the suite (called \texttt{dafny0}). After this $259$ programs were left. We then
applied DARe to all methods that did not initially report a verification error, which was $7$ methods, leaving $252$ programs. Note that DARe was applied method-by-method: there were several cases within
the $252$ where some methods did not intitially verify. These methods were ignored. 


Note that to compare the algorithm in \S \ref{sec:res:compare} a smaller evaluation set was used as only \algname{CombineDare} supports the full functionality.

\subsection{Annotations removed}

In total there were $2869$ annotation over the programs\footnote{This only includes methods that initially verified.}, meaning on average each program had $11.4$ annotations. DARe managed to remove $88\%$ of the annotations:
a total of $345$ annotations were left, with an average of $1.4$ annotations per program. 

\begin{figure}
\begin{center}
\includegraphics[width=0.6\textwidth]{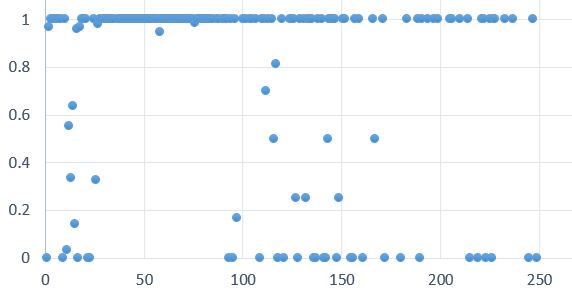}
\end{center}
\caption{Percentage of annotations removed vs LoC.}\label{fig:res:locrem}
\end{figure}

Figure \ref{fig:res:locrem} shows the relationship between annotations removed and the LoC. Note that the relationship between the total number of annotations and annotations removed is almost identical.
As the figure illustrates there is no clear pattern, although it is worth noting that there is a high degree of programs where no annotations were required.

Below we discuss each type of annotations separately.

\subsubsection{Assertions}

In our test set there were $113$ programs with a total of $657$ assertions (average of $5.8$ for each program). DARe was able to remove $91\%$ of these, meaning only $58$ annotations were left (average of $0.5$ for each program). Of the remaining $58$ we tried to remove any conjuncts if present, which could not be done for any of them. 

It is not clear why this number is so high. One of the reasons are that annotations are often used to find a proof or to debug in order to discover where the prover fails. These may then have been left. Another reason is improved automation by Dafny. A third reason is that some of them are probably left to explain the proof and should therefore not be removed.

\subsubsection{Lemma calls}

There were a total of $491$ lemmas calls over $55$ programs in our test-data. Out of these, $85\%$ could be removed leaving only $75$. In many cases all lemma calls could be removed. 

As with assertions, the number of lemma calls that can be removed is surprisingly high. Some of the reasons are the same as for assertions, as one can see lemma calls as ways of asserting richer properties that require proofs.
Another direct reasons may be the introduction of an ``induction tactic'' \cite{Leino12a}, which means that many recursive function calls are no longer required. Any development pre-dating this feature would have had such calls
which can now be removed.

\subsubsection{Invariants}

$57$ programs had a total of $393$ invariants. DARe was able to remove $102$ of these invariants, meaning $26\%$ could be removed. As with assertions, invariants may be made of a conjunction. For the remaining $291$ invariants we could split some of them, resulting in $125$ conjuncts\footnote{We only consider the invariants that could be split up.} Of these $6$ conjuncts could be removed.

\subsubsection{Variants}

$51$ programs had a total of $168$ variants, where $40$ ($24\%$) could be removed. 

As we have already mention, no wild-card variants could not be removed: $6$ programs used a total of $16$ of these clauses and none of them could be removed.  

\subsubsection{Calculations}

$29$ programs had a total of $114$ calculations. $89\%$ ($101$) could be removed fully. $8$ of the remaining calculations where in one program (\scode{Calculations.dfy}). When splitting them up as described above we ended up with $47$ parts where $38\%$ of them could be removed. The remaining calculations where unchanged. 

Again, this is a surprisingly high number that can be removed. One reason may be that calculations are predominantly used `explain' proofs and not to automate them.

\subsection{Runtime} 

The most important feature was if removing annotations had any impact on runtime. There are reasons
that it may increase of decrease:
\begin{itemize}
\item By removing unnecessary clutter, the prover can find a proof faster, thus reducing the verification time;
\item By removing annotations, we could remove a shorter proofs, thus increasing the search required and thus increasing verification time.
\end{itemize}

To evaluate runtime we ran each program $3$ times and used the average from this. On average for all $252$ programs, each program took $1027$ ms to verify before annotations were removed, and $566$ ms
afterwards, i.e. the runtime was almost halved. The average reduction for all programs was $461$ ms, while the median was only $4.5$ ms. From that one may conclude that many programs were either very small or there were a few programs with a drastic improvement in runtime while very few had a massive improvement.










\subsection{Comparing algorithms}\label{sec:res:compare}

In \S \ref{sec:tool} three different algorithms: \algname{CombineDare} is the one the tool supports by default (and used by the GUI), while \algname{CompleteDare} tries all possible combinations and will find the
solution with fewest annotations.  For technical reason we were only able to compare the algorithms for a version of DARe that does not have all the features presented above. This version was only applicable to $117$ programs, and is the same data as used in \cite{2016NWPT}.

Our first experiment was to test the recall of \algname{CombineDare} by comparing it with an implementation of \algname{CompleteDare}. As this was only for comparison we only 
support assertions and invariant removal.

It was only case where the results from the two algorithms different, and in that case the size was actually the same. Although this may suggest a recall of $100\%$, the size of the data and the fact we only support
assertions and invariants makes such conclusion premature. However, it was sufficient for us to conclude that in practice the recall of \algname{CombineDare} is sufficient to justify using it.

\begin{figure}
\begin{center}
\includegraphics[width=0.7\textwidth]{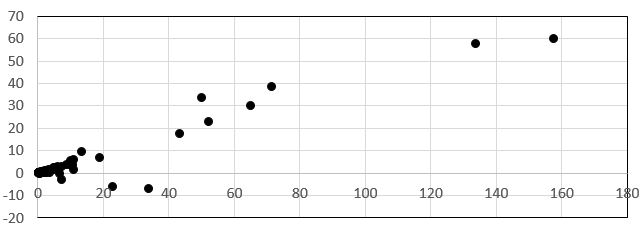}
\end{center}
\caption{Runtime difference between \algname{SimpleDARe} and \algname{CompleteDare}.}\label{fig:runtime:diff}
\end{figure}

To compare runtime we only compared \algname{SimpleDARe} and \algname{CombinedDare} and \algname{CompleteDARe} only 
supports assertions and invariants so a comparison will neither be fair nor informative. 
The difference in runtime between \algname{SimpleDARe} and \algname{CombinedDare} was not as large as expected. The average difference is about $5$ seconds, with a median of only $0.26$ seconds.  Figure \ref{fig:runtime:diff} shows the difference in runtime (Y-axis) with respect to the overall runtime of \algname{SimpleDARe}. The figure shows that while the difference increases with the runtime, there is no indication that the factor is related to runtime. The smaller than expected difference may be due to:
an uneven distribution of annotations between the methods\footnote{Recall that \algname{CombinedDare} performs better the more distributed annotations are.} and optimisations within Boogie
where verification results are cached at the method level so that only changed methods are re-verified \cite{leino2015fine}.

\bs{}
\section[Related work]{Related work\footnote{An abstract of this work appears in \cite{2016NWPT}, and a tool demo is given in \cite{2016FIDE}. \cite{2016NWPT} only gives a high-level view of the tool and results, while \cite{2016FIDE} only illustrates tool usage. This is a special issue dedicated to the work presented in \cite{2016FIDE}.}}\label{sec:related}
\as{}

This work is an example of \emph{search based software engineering} \cite{harman2012search}, where search techniques are used to support and automate the work by software engineers. 
The problem of discovering which annotations are unnecessary, which is the key challenge of our work, is comparable to the \emph{credit-assignment problem} 
in reinforcement learning \cite{minsky63}. This is the problem of  determining which components contribute to the overall success of a system or game.

We are not familiar with any similar work for program verifiers, where the focus tends to be on the \emph{discovery} of auxiliary annotations (e.g \cite{Gupta09,hoder2011invariant,srivastava2009program,ernst2007daikon}).
Our work is inspired by \emph{dead code removal} (or dead code elimination), 
which is the problem of discovering and removing code that is never reached. This is a common compiler optimization, and is typically achieved by data-flow analysis.  

Within automated theorem proving there is a large line of work concerned with selecting only necessary lemmas and axioms provided to the prover in order to reduce the search space (e.g. \cite{sutcliffe2007srass,kuhlwein2012overview}).
An alternative approach to DARe would be to see if any of these heuristics can be explored to automatise our search, which currently remove annotations one by one. Another approach 
is to derive the required annotations from the proof terms generated by Z3 \cite{deMouraZ3roofs08}, and remove any annotation not used. This will not require any additional Boogie calls, and should thus be considerably faster and scale better. This approach is however complicated by the way in which translation from Dafny into Z3 is handled via Boogie. As our work was mainly motivated by discovering the amount of dead annotation, we decided for the simpler search-based approach. In the future, it would be interesting to see how feasible it is to analyse the result from Z3 (which may also be beneficial for general caching of verification attempts \cite{leino2015fine}). Note that this is similar to how 
the \emph{Sledgehammer} tool for Isabelle uses tools like Z3 in order to reconstruct the proofs within Isabelle \cite{paulson2010three}.

Dual to our work of trying to a find a minimal solution to a success criteria (i.e. provability), is finding the minimal causes of failures. This is particularly relevant
for debugging regressions. Two approaches that are reminiscent of our work are: \emph{regression containment} \cite{Ness97}, which automates the process of isolating regressions;
and the more general \emph{delta debugging} \cite{Zeller1999yesterday}. Here, the smallest configuration that introduces a regression is (automatically) sought, which is similar
to the manner in which we are seeking the smallest set of annotations. ``The method'' of the ACL2 community \cite{kaufmann2013computer}
involves deriving a ``toy problem'' from a problem such that the solution of the toy problem can be ported to solve the main problem. One would like this toy problem 
to be as small as possible. Note that this is a (manual) development method, and has not been automated.

The topic of \emph{automated repair} has recently received a great deal of attention, with a large number of tools addressing this problem\footnote{See \url{automated-program-repair.org}.}.  
The  SemFix repair tool \cite{Nguyen:2013:SemFix} steps through each statement, ranked by their suspiciousness of being the bug, and attempts to repair them one-by-one 
in the same way as DARe steps through annotations one-by-one (albeit without any ranking). 
The AutoFix tool \cite{Pei2014} repairs programs in presence of contracts, and uses program synthesis techniques to add statements. This is again analogous to how DARe works, 
with the difference that DARe removes annotations as opposed to generating statements.
 Within model-driven engineering there has been work on discovering redundant constraints, i.e. constraints that can be derived from other constraints
\cite{Wang15}. This work uses the Alloy system \cite{jackson2012software}  to derive these, and is similar to DARe in that redundant constructs are removed using a verification system.

The main advantage of our approach relates to readability, as there is little effect on runtime once redundant annotations are removed. There are many tools to support developers in achieve ``better code'' in that respect. The most well-known is probably \emph{ReSharper} from JetBrains\footnote{See \url{www.jetbrains.com/resharper/}.}. As with DARe, it has a Visual Studio plug-in, and can highlight issues related to quality (e.g. redundant casts) as well as use of coding standards. 

There are also several \emph{code refactorings} that relate to improved readability \cite{Fowler99}, and an additional, or even alternative, way to improve readability and structure is to apply more general refactorings to the annotations in the program text. Refactorings have previously been studied for \emph{proof scripts} \cite{whitesidethesis,Whiteside11}, and we have started to address this for program verifiers. However, it is not clear how this can be fully automated.
Recent work on tactics for Dafny aims to provide more high-level guidance, and attempts to generate a minimum number of annotations from given verification patterns \cite{Grov2016,Grov2016FM}.


\bs{}
\section{Conclusion \& future work}\label{sec:concl}
\as{}

We have presented the DARe tool, which automatically removes annotations that are not required by the verifier. We have extended the Dafny IDE with a semi-automatic version of the tool,
and shown that in our evaluation set, $88\%$ of annotations can be removed. It was particularly successful for assertions, lemma calls and calculations.

Next, we would like to apply DARe to larger programs.
In particular, the large Ironfleet development, which in total has $40K$ lines of ``proofs''\footnote{This number is likely to contain elements which we have not classified as annotations.} \cite{hawblitzel2015ironfleet}, would be a very interesting case study. This will stress-test DARe, but is likely to require some special-purpose configurations to DARe (e.g. change of time-out values), due to the sheer size of the development. It is also worth noting that most of the programs in the Dafny library have been developed by the main Dafny developer (Rustan Leino), who is considered to be the main expert on Dafny. He is therefore likely to write more ``minimal'' annotations than ``normal'' users. It will therefore be interesting to see if the number of dead annotations increase when developed by others.


There are several ways to extend the work further. We could address further simplification techniques, possibly normalising formulas to CNF form and then split the conjuncts. We could also extend the tool to other constructs. For example, programming constructs for the ghost state can also be removed. Another example is to remove/simplify contracts for (auxiliary) recursive methods which are analogous to loop invariants; in this case a user may need to indicate whether or not a lemma is auxiliary. Similarly, certain auxiliary lemmas that are no longer called can be removed. Another possibility, is to extend the work to more meta-level analysis such as subsumption and equivalence relations between annotations, as well as exploring the Z3 proof terms as discussed in \S \ref{sec:related}.

We would like to empirically investigate if the classification of annotations makes sense for Dafny developers.  
This could be achieved by interviews, user experiments, and (to some degree) by instrumenting DARe with a logging mechanism to capture which dead annotations a developer selects to remove. We may also make the classification user configurable. Such studies will help inform  the development of future releases of DARe.  Finally, we could implement similar tools for other program verifiers and re-do the same experiments for them.



\bibliographystyle{plain}
\bibliography{dafnybibl}

\end{document}